\DeclarePairedDelimiter{\prt}{\lfloor}{\rfloor}
\begin{document}

\def\newstr{\par\noindent}
\def\ms{\medskip\newstr}

\renewcommand{\proofname}{Proof}
\renewcommand{\labelenumi}{$\bullet$}
\newcommand{\pluseq}{\mathrel{+}=}

\newcommand{\ifdraft}[1]{#1}
\definecolor{aocolour}{rgb}{0.7,0.8,1}
\definecolor{vmcolour}{rgb}{1,0.8,0.7}
\newcommand{\ao}[1]{\ifdraft{\noindent\colorbox{aocolour}{A.O.: #1}}}
\newcommand{\vm}[1]{\ifdraft{\noindent\colorbox{vmcolour}{V.M.: #1}}}

\newcommand{\Z}{\mathbb{Z}}
\newcommand{\N}{\mathbb{N}}
\newcommand{\R}{\mathbb{R}}
\newcommand{\Q}{\mathbb{Q}}
\newcommand{\K}{\mathbb{K}}
\newcommand{\Cm}{\mathbb{C}}
\newcommand{\Pm}{\mathbb{P}}
\newcommand{\Zero}{\mathbb{O}}
\newcommand{\F}{\mathbb{F}_2}
\newcommand{\ilim}{\int\limits}
\newcommand{\impl}{\Rightarrow}
\newcommand{\set}[2]{\{ \, #1 \mid #2 \, \}}

\newcommand{\A}{\mathcal{A}}
\newcommand{\B}{\mathcal{B}}
\renewcommand{\C}{\mathcal{C}}

\theoremstyle{plain}
\newtheorem{theorem}{Theorem}
\newtheorem{lemma}{Lemma}
\newtheorem{st}{Statement}
\newtheorem{prop}{Properties}

\theoremstyle{definition}
\newtheorem{definition}{Definition}
\newtheorem{example}{Example}
\newtheorem{cor}{Corollary}
\newtheorem{problem}{Problem}

\newtheorem{oldtheorem}{Theorem}
\renewcommand{\theoldtheorem}{\Alph{oldtheorem}}

\theoremstyle{remark}
\newtheorem{remark}{Remark}
\newtheorem{note}{Note}

\author{Vladislav Makarov\thanks{Saint-Petersburg State University. Supported by Russian Science Foundation, project 18-11-00100. }}
\title{Counting ternary square-free words quickly}

\maketitle

\begin{abstract}
An efficient, when compared to exhaustive enumeration,  algorithm for computing the number of square-free words of length $n$ over the alphabet $\{a, b, c\}$ is presented.
\end{abstract}
 
\section{Introduction}\label{intro}

A \emph{square} is a string of form $ww$ for some non-empty string $w$. 
A string is \emph{square-free}, if it has no square substrings.
Over the binary alphabet, there is only finite number of square-free strings,
but the number of square-free strings of length $n$ over the ternary alphabet grows with $n$ slowly,
but exponentially~\cite{expo}.

Since the dawn of combinatorics of words, there has been a lot of research on bounding the number
of ternary square free words of length $n$ from below and from above (OEIS sequence A006156~\cite{oeis}). 
See, for example, a classic
review by Berstel~\cite{rev} and a new review by Shur~\cite{shur}, to see how much the state
of art has changed in-between.

Most of the research was focused on the estimating the numbers from above and from below,
culminating in Kolpakov's~\cite{lower} and Shur's~\cite{shur} methods of proving lower and upper
bounds on the growth rate of ternary square-free words (and other power-avoiding words
as well), that can be made as close as needed, given enough computational resources.

Computing their \emph{exact} number has attracted significantly less attention.
Back in 2001, Grimm~\cite{grimm} obtained the desired values up to $110$, but mostly
in order to prove a new upper bound on their growth rate.

We will go up to $n = 141$ on a completely ordinary laptop. In just a few hours. 
This paper gives a high-level account of underlying ideas, for
implementation details and possible optimisations refer 
to the repository with implementation~\cite{github}.

Most ideas used here are very classic, especially using the \emph{antidictionaries},
consisting of minimal squares and building an Aho-Corasick automaton for them. The key
contributions are Lemmas~\ref{square-overlap} and~\ref{key-lemma} and the way they are used in the final algorithm.

As usual, $O^*$ notation suppresses polynomial factors, so $O^*(1)$ is any at most polynomially
growing function, $O^*(2^n)$ is $O(2^n \cdot \mathrm{poly}(n))$, et cetera.

\section{A simple (but mostly useless) algorithm}

Unless the opposite is explicitly mentioned,
all strings in the following text are over the alphabet $\Sigma = \{a, b, c\}$.

It is not necessary to know how Aho-Corasick automaton works in order to understand
this paper.
The only important part is the following theorem, which is a simple
consequence of a more general result by Aho and Corasick:

\begin{oldtheorem}[Aho and Corasick, 1975~\cite{karas}] \label{crucian} For any finite subset $S$ of $\Sigma^*$, 
there exists a deterministic finite automaton $A$ with at most 
$1 + \sum\limits_{w \in S} |w|$ states, such that $L(A)$ is exactly the language of
all strings that contain at least one string from $S$ as a substring. Moreover, such
an automaton can be constructed in $O(\sum\limits_{w \in S} |w|)$ time.
\end{oldtheorem}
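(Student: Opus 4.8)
The plan is to prove the result constructively, exhibiting the automaton directly; this is the classical Aho--Corasick construction, specialised to \emph{substring} recognition. First I would build the \emph{trie} of $S$, whose nodes are exactly the distinct prefixes of the words of $S$: the empty word is the root, and there is an edge labelled $x \in \Sigma$ from the node of $u$ to the node of $ux$ whenever both are prefixes of words in $S$. Each non-root node corresponds to one character occurrence inside some word, and equal prefixes are shared, so the number of nodes is at most $1 + \sum_{w \in S} |w|$. Declaring the trie nodes to be the states of $A$ and the root to be the initial state already secures the claimed state bound, so no auxiliary states will be introduced later.

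Next I would turn the trie into a complete deterministic automaton. For a node $v$ standing for a string $s$, let $\mathrm{fail}(v)$ be the node of the longest \emph{proper} suffix of $s$ that is again a trie node, with $\mathrm{fail}$ sending the root and its children to the root. The transition function is $\delta(v, x) = {}$ the child of $v$ along the trie edge labelled $x$ if it exists, and $\delta(v, x) = \delta(\mathrm{fail}(v), x)$ otherwise. The heart of the correctness argument is the invariant that, after reading a string $t$, the automaton sits at the node representing the longest suffix of $t$ that is a prefix of some word of $S$; I would establish this by induction on $|t|$, the inductive step being exactly the case analysis built into the definition of $\delta$.

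A state is then declared accepting precisely when the string it represents has some $w \in S$ as a suffix, which holds iff the node is a terminal node of the trie or its $\mathrm{fail}$-chain meets one; I would compute this flag while building the $\mathrm{fail}$ links. Because the target language asks only that some $w$ occur \emph{somewhere} as a substring, I would make every accepting state absorbing. The invariant then gives both inclusions: if $t$ contains some $w$, the prefix of $t$ ending at the first such occurrence reaches an accepting node and acceptance persists; conversely, reaching an accepting node means some suffix of the scanned prefix lies in $S$. Hence $L(A)$ is exactly the substring-containment language.

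The step I expect to be the main obstacle is pinning down this invariant rigorously, since it simultaneously fixes the meaning of $\mathrm{fail}$, justifies the recursive definition of $\delta$, and drives the language equality; the overlapping and nested occurrences of words of $S$ are what make the bookkeeping delicate. By contrast, the running time is comparatively painless: processing the trie in breadth-first order guarantees that, when a node $v$ with parent $u$ and incoming label $x$ is reached, both $\mathrm{fail}(v) = \delta(\mathrm{fail}(u), x)$ and any fall-back value $\delta(\mathrm{fail}(v), x)$ have already been computed, because the nodes involved have strictly smaller depth. Thus each of the $O(\sum_{w \in S} |w|)$ trie nodes contributes $O(1)$ work per symbol of the fixed three-letter alphabet $\Sigma$, yielding total construction time $O(\sum_{w \in S} |w|)$ and completing the argument.
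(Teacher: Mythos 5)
The paper never proves this statement: it is Theorem~A, imported verbatim from Aho and Corasick's 1975 paper, and the author explicitly treats it as a black box (``It is not necessary to know how Aho--Corasick automaton works in order to understand this paper''). So the only ``route'' the paper takes is a citation, and your proposal supplies exactly the classical construction that the citation points to. On its own merits your argument is essentially correct: the trie gives the state bound $1 + \sum_{w \in S} |w|$ with no auxiliary states, the failure links together with the longest-suffix invariant give the language equality, and breadth-first computation over the fixed ternary alphabet gives the $O\bigl(\sum_{w \in S} |w|\bigr)$ construction time.

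Two points of bookkeeping would need tightening in a full write-up. First, your definition $\delta(v,x) = \delta(\mathrm{fail}(v),x)$ is circular at the root: with $\mathrm{fail}(\mathrm{root}) = \mathrm{root}$ and no $x$-child, it defines $\delta(\mathrm{root},x)$ in terms of itself, so you need the explicit convention $\delta(\mathrm{root},x) = \mathrm{root}$ in that case; everywhere else the recursion is well founded because $\mathrm{fail}$ strictly decreases depth. Second, once you make accepting states absorbing you have modified $\delta$, so the suffix invariant you prove by induction holds only until the first visit to an accepting state. Your ``first occurrence'' argument secretly relies on the fact that no accepting state can be visited strictly before the first occurrence of a word of $S$ ends (otherwise, by the invariant, an earlier occurrence would exist); this small exchange should be spelled out, since as written the induction and the absorption are in tension. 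Neither issue is a gap in the idea, and the overall structure matches what the cited source establishes.
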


\begin{definition} A string is a \emph{minimal square} if it is a square, but does not contain 
any smaller squares as substrings. 
\end{definition}

\begin{definition} Denote the set of all square-free strings of length exactly $\ell$ by $L_\ell$.
\end{definition}

\begin{definition} Similarly, denote the set of all minimal squares with half-length at most $\ell$ by 
$M_\ell$.
\end{definition}

The main problem at hand is computing $|L_n|$. A string of length $n$ is square-free
if and only if it does not contain any \emph{minimal} squares of half-length at most $\prt{n/2}$. 
Indeed, if a string has a non-minimal square substring, it has a smaller square substring
by definition of non-minimality.

Let $A = (Q, q_0, \delta, F)$ be a DFA from Theorem~\ref{crucian} for the set $M_{\prt{n/2}}$.
Here, and in the rest of the text, $Q$ is the set of states, $q_0$ is the starting state, $\delta \colon
Q \times \Sigma \to Q$ is the transition function (with the usual extension to function $Q \times \Sigma^* \to Q$) and $F$ is the set of accepting states.

Let $f(\ell, q)$ be the number of strings $w \in \Sigma^\ell$, such that $\delta(q_0, w) = q$.
Then, we can compute all values of $f(\ell, q)$ row-by-row. Indeed, we know $f(0, \cdot)$:
$f(0, q_0) = 1$ and $f(0, q) = 0$ for $q \neq q_0$. Moreover, we can compute $f(\ell + 1, \cdot)$
through $f(\ell, \cdot)$: to compute $f(\ell + 1, q)$, sum up $f(\ell, p)$ over all predecessors of
$q$. In other words, $f(\ell + 1, q) = \sum_{p \in Q, d \in \Sigma, \delta(p, d) = q} f(\ell, p)$. 

In the end, $\sum\limits_{q \in F} f(n, q)$ is the number of strings of length $n$ that are \emph{not}
square free. So, to compute $|L_n|$, it is enough to:
\begin{enumerate}
\item[1)] Find the set $M_{\prt{n/2}}$ of all short minimal squares.
\item[2)] Build the automaton $A$ from Theorem~\ref{crucian} for $M_{\prt{n/2}}$.
\item[3)] Compute $f(n, q)$ for all $q \in Q$.
\end{enumerate}
How to do all these things?
\begin{enumerate}
\item[1)] Iterate over all square-free words of length at most $\prt{n/2}$ in $O^*(L_{\prt{n/2}})$ time and $O^*(1)$ memory, and, for each of them, check whether it is a minimal square when doubled. It is possible to
achieve a polynomial in $n$ speed-up by building the automaton from point 2 on the fly~\cite[Subsection 3.3]{shur}
or with other similar optimisations. However, the speed-up would still be only polynomial.
\item[2)] Just use Theorem~\ref{crucian}, the resulting automaton will have at most $2\prt{n/2} M_{\prt{n/2}} + 1 = O^* (M_{\prt{n/2}})$ states and can be constructed in $O^*(M_{\prt{n/2}})$
time.
\item[3)] Compute the values of $f(\ell, \cdot)$ row-by-row, using above formulas. It is enough
to keep only values of $f(\ell, \cdot)$ and $f(\ell + 1, \cdot)$ in the memory. Here, we need
$O^*(M_{\prt{n/2}})$ of both time and memory.
\end{enumerate}

In total, we need $O^*(L_{\prt{n/2}})$ time and $O^*(M_{\prt{n/2}})$ memory, as promised.

\section{An improved algorithm}\label{main_algo}

The main factor that limits the practical usefulness of the above approach 
 is the memory usage (compared to the high running time of a naive algorithm).

So, we want to reduce the memory consumption, possibly at the cost of making running
time slightly worse.
The key observation that makes this possible is a pretty interesting one 
and can be seen as an incomplete application of the inclusion-exclusion principle. 

Intuitively, the knowledge that some fixed substring of a string is a square, gives a lot
of constraints of type ``symbols on some positions are equal''. Hence, two long (whatever
that means, the exact definition of ``long'' will come later) square
strings place too many constraints on the string, meaning that there are a lot of 
symbols that are ``forced'' to be equal, hence we can find a smaller square substring.
 
\begin{example} Suppose that have a string $s$ of length $13$ and we know that its prefix of length
$8$ is a square, and so is its suffix of length $12$. Hence, we know that $s_i = s_{i + 4}$ for $0 \leqslant i < 4$ and, similarly, $s_i = s_{i + 6}$ for $1 \leqslant i < 7$. Then, $s$ looks like $1232123232123$, where equal digits correspond to symbols
that \emph{must} be equal, and different digits correspond to symbols that \emph{can} be different. And, indeed, there is short square substring $2323$ in the middle.
\end{example}
 
In general, it is not true that all strings have at most one long minimal square substring,
there are some counterexamples.
Indeed, the string $abcabcab$
has three distinct long minimal square substrings: $abcabc$, $bcabca$ and $cabcab$. 
However, all these squares are of the same half-length and start in the consecutive
positions of the original string, meaning that a lot of constraints actually coincide.
Intuitively, all counterexamples have to look in this, very regular, way.

The following Lemmas~\ref{square-overlap} and~\ref{key-lemma} are exact statements
that correspond to this intuition.

\begin{lemma}\label{square-overlap} Let $s$ be a string (over any alphabet), such that some its proper prefix $uu$ is a minimal
square and some its proper suffix $vv$ is a minimal square. Then, either $|s| \geqslant 3\min(|u|,
|v|) + 1$, or $|u| = |v|$ and $s = uu p$ for some non-empty prefix $p$ of $u$.
\end{lemma}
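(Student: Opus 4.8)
The plan is to reduce to the non-trivial regime and then extract a short square from the combined periodic structure. By reversing $s$ (which swaps the two squares and preserves minimality) I may assume $|u| \leqslant |v|$, so that $\min(|u|,|v|) = |u|$; since the structural second alternative can hold only when $|u|=|v|$, the case $|u|>|v|$ will be covered by reversal, which can only produce the length bound. Write $n = |s|$. If $n \geqslant 3|u|+1$ the first alternative holds, so assume $n \leqslant 3|u|$. The hypotheses translate into two periodicity constraints: $s_i = s_{i+|u|}$ for $0 \leqslant i < |u|$ (from the prefix $uu$), and $s_i = s_{i+|v|}$ for $g \leqslant i < n-|v|$, where $g := n-2|v|$ is the starting position of the suffix $vv$. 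Because the squares are proper, $g \geqslant 1$, and the assumption $n \leqslant 3|u|$ is exactly equivalent to $g \leqslant |u| - 2(|v|-|u|)$; this estimate is what will make everything fit.

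Consider first the equal case $|u|=|v| =: m$. Here both constraints assert period $m$, one on the window $[0,2m)$ and one on $[g,n)$. Since $g \leqslant m$, the index ranges $[0,m)$ and $[g,n-m)$ together cover $[0,n-m)$, so $s$ has global period $m$. As $2m < n \leqslant 3m$, this forces $s = uup$ with $p = s_{2m}\cdots s_{n-1}$, which by periodicity is a nonempty prefix of $u$ — exactly the second alternative.

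The crux is ruling out $|u| < |v|$; set $d := |v|-|u| \geqslant 1$. For every $i$ in the overlap $[g,|u|)$ both constraints apply, so $s_{i+|u|} = s_i = s_{i+|v|} = s_{i+|u|+d}$. As $i$ ranges over $[g,|u|)$, the index $k := i+|u|$ ranges over $[g+|u|,2|u|)$, and we obtain $s_k = s_{k+d}$ throughout, i.e. a period $d$ on this window. The bound $g \leqslant |u|-2d$ does two jobs at once: it makes the window have length $|u|-g \geqslant 2d$, so the block $s_{g+|u|}\cdots s_{g+|u|+2d-1}$ is a genuine square of half-length $d$; and it gives $g+|u|+2d \leqslant 2|u|$, placing this square entirely inside the prefix $uu$. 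Moreover $g \geqslant 1$ together with $g \leqslant |u|-2d$ yields $1 \leqslant d$ and $2d \leqslant |u|-1$, so the half-length satisfies $1 \leqslant d < |u|$. Thus $uu$ contains a strictly smaller square, contradicting its minimality. Hence $|u|=|v|$, and we are back in the previous case.

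The main obstacle is precisely the bookkeeping of this last step: one must combine the two periods so that the short square produced lands \emph{inside} the smaller of the two minimal squares, and it is the threshold $3\min(|u|,|v|)+1$ that exactly separates the regime where this localization succeeds from the regime where the two squares are too far apart to interact. The Fine--Wilf periodicity lemma is the natural tool, but its hypothesis (a common window of length $|u|+|v|-\gcd(|u|,|v|)$) is slightly stronger than what is available here, since the common window $[g,2|u|)$ has length only $2|v|-|u|$; the direct ``period subtraction'' above sidesteps this, needing only the weaker length $2d$ on the shifted window. Finally, the case $|v|<|u|$ is dispatched by the same argument applied to the reversed string, which produces a short square inside $vv$ instead.
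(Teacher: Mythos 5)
Your proof is correct, and in the decisive case $|u| \neq |v|$ it takes a genuinely different route from the paper's. The equal-length case is identical in both proofs: the two period-$m$ constraint windows overlap, so $s$ is globally $m$-periodic and must equal $uup$. For $|u| < |v|$, however, the paper argues by string factorization: it writes $u = fg$, $v = gh$ with $g$ the nonempty overlap of $vv$ with the first copy of $u$, deduces $h = fx$ with $x$ nonempty and that $g$ is a prefix of $xg$, then splits on $|g|$ versus $|x|$, reaching either a length contradiction or the square $xx$ inside $vv$ --- that is, it contradicts minimality of the \emph{longer} square. You work purely with indices instead: superimposing the two periodicities on their common range yields the difference period $d = |v| - |u|$ on the shifted window $[g+|u|, 2|u|)$, and the standing assumption $|s| \leqslant 3|u|$, rewritten as $g \leqslant |u| - 2d$, guarantees both that this window has length at least $2d$ (so it contains a square of half-length $d \geqslant 1$) and that this square lands entirely inside $uu$ --- contradicting minimality of the \emph{shorter} square. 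I checked the index bookkeeping ($g \geqslant 1$ from properness, the overlap $[g,|u|)$ being nonempty and inside both constraint ranges, and $d < |u|$ so the square is genuinely smaller); it is all sound, and the reversal symmetry you invoke does legitimately reduce $|u| > |v|$ to the strict case, since there only the reversal-invariant length alternative can occur. What your version buys is elementarity and brevity: no factorization case analysis, an explicit justification of the ``without loss of generality'' that the paper leaves implicit, and a sharp explanation of why $3\min(|u|,|v|)+1$ is the right threshold --- your remark that the Fine--Wilf hypothesis is unavailable (the common window can be as short as $2|v|-|u|$) and must be replaced by bare period subtraction is exactly the point. What the paper's factorization buys is explicit structural information about how the two squares interleave (in the critical configuration $s = fghgh$ with $uu = fgfg$ a prefix), which matches the spirit of its automated ``partial proof''; but as far as the stated bound is concerned, the two arguments are of equal strength.
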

\begin{proof}[Sketch of an automatic ``partial proof''] Suppose that we want to check this lemma
for $|u| \leqslant d$ and $|v| \leqslant d$. Let us iterate on the length of $s$ (up to $3d$) and
create a graph with $|s|$ vertices, with edges corresponding to ``forced'' equalities between
symbols: edges between $i$ and $i + |u|$ for $0 \leqslant i < |u|$ and edges between
$i$ and $i + |v|$ for $|s| - 2|v| \leqslant i < |s| - |v|$. Now, the connected components in this graph
tell which symbols have to be equal and which do not. Now, we can just check whether there are
any small forced squares that disprove that either $uu$ or $vv$ was a \emph{minimal} square. 
The whole
procedure needs polynomial in $d$ time. Specifically, $O(d^5)$ for the most straightforward
implementation. Hence, running the above procedure can actually \emph{prove} the Lemma,
but only
for small lengths. I verified the Lemma for lengths of $|s|$ up to $200$ this way~\cite[test\_overlay.cpp]{github}.
\end{proof}
\begin{proof}[Mathematical proof] The mathematical proof is messier, but works for all lengths. 
Proof by contradiction. Let $s = s_0 s_1 \ldots s_{|s| - 1}$.

Consider the case $|u| = |v|$ first. Then, either $|s| \geqslant 3|u| + 1$ (and we are done), or $|s| \leqslant 3|u|$. In the latter case, we know $s_i = s_{|u| + i}$ for
$0 \leqslant i < |u|$, because $uu$ is a prefix of $s$ and $s_i = s_{|u| + i}$ for $|s| - 2|u| \leqslant i \leqslant
|s| - |u|$, because $vv$ is a suffix of $s$. Therefore, $s_i = s_{|u| + i}$ for $0 \leqslant i < |s| - |u|$, because $|s| - 2|u| \leqslant |u|$.  Hence,
$s_i = s_{i \bmod |u|}$ for $0 \leqslant i < |s|$, meaning that $s$ is a prefix of $u^*$. Because
the length of $u$ is between $2|u| + 1$ and $3|u|$, it has the same exact form as promised
by the lemma.

Now, suppose that $|u| \neq |v|$. Without loss of generality, $|u| < |v|$. Then suffix $rr$ overlaps with first $u$: otherwise the whole string $s$
has length at least $|u| + 2|v|$, which is at least
$|u| + 2(|u| + 1) = 3|u| + 2 \geqslant 3|u| + 1$, 
contradiction. Hence, $u = fg$ and $v = gh$, where $g$
is the non-empty overlap between the first $u$ and $vv$. Moreover, $f$ is non-empty,
because otherwise $uu$ would be substring of $vv$. Finally, because the right square is longer,
$|gh| = |v| > |u| = |fg|$, hence $|h| > |f|$.

Now we know almost everything about the relative positions of $uu$ and $vv$. More
specifically, $fgfg = uu$ is a prefix of $fghgh = fvv$. Hence, $u = fg$ is a prefix
of $hg$ (here we use that $|h| > |f|$). This, in turn, implies that $f$ is a proper prefix of $h$:
$h = fx$ for some non-empty string $x$. Therefore, because $fgfg = (fgf)(g)$ is a prefix 
of $fghgh = fg(fx)gh = (fgf)(xg)h$. Hence, $g$ is a prefix of $xg$. 

Suppose that $|g| \leqslant |x| - 1$. Then, $|s| \geqslant |fghgh| = |f(gfx)(gfx)| =
 2|x| + 2|g| + 3|f| \geqslant 4|g| + 2 + 3|f| = 3|fg| + |g| + 2 = 3|u| + |g| + 2 \geqslant
 3|u| + 1$. Contradiction.
  
Now we know that $|g| \geqslant |x|$.   This, along with $g$ being a prefix of $xg$, 
means that $x$ is a prefix of $g$: $g = xy$ for some, possibly empty, string $y$.
Hence, $vv = g(h)(g)h = g(fx)(xy)h = gf(xx)yh$ is not a minimal square, because of a square
substring $xx$. Contradiction.  
\end{proof}

\begin{lemma}\label{key-lemma} Let $s$ be a string that is not square free, but does not have
square substrings with half-length strictly less than $|s|/3$ (no rounding here). Then, $s$
has a unique inclusion-maximal substring of form $wwp$, where $ww$ is a minimal square and $p$ is some,
possibly empty, prefix of $w$. Moreover, $s$ has exactly $|p|+1$ minimal square substrings
~--- specifically the substrings of $wwp$ with length $2|w|$.
\end{lemma}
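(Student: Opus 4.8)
The plan is to analyse \emph{all} minimal square substrings of $s$ at once and show they are forced into a single periodic block. Write $s = s_0 \ldots s_{|s|-1}$ and list the occurrences of minimal squares as substrings by starting position, say at positions $a_1 < a_2 < \cdots < a_k$ (there is at least one, since $s$ is not square free; the case $k = 1$ is immediate with $p$ empty, so assume $k \geqslant 2$). No minimal square can be a proper substring of another, since the longer one would then contain a smaller square and fail to be minimal. Hence distinct occurrences have distinct starts \emph{and} distinct ends, and ordering by start also strictly increases the ends $e_1 < e_2 < \cdots < e_k$. The one numerical fact driving everything is that each half-length is at least $|s|/3$: any substring of $s$ running from the start of one minimal square to the end of another has length at most $|s| \leqslant 3\min(\text{half-lengths})$, which is exactly the regime in which the first alternative of Lemma~\ref{square-overlap} is impossible.

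First I would pin down the half-lengths. For each $j \geqslant 2$ apply Lemma~\ref{square-overlap} to $s[a_1 : e_j]$: its proper prefix is the first minimal square (proper since $e_1 < e_j$) and its proper suffix is the $j$-th minimal square (proper since $a_1 < a_j$). The length bound forbids the alternative $|s[a_1:e_j]| \geqslant 3\min + 1$, so we land in the degenerate branch: the two squares have equal half-length and $s[a_1:e_j] = w\,w\,q_j$ for a non-empty prefix $q_j$ of the first root $w$. In particular all minimal squares share one half-length $m = |w|$, and taking $j = k$ gives the whole span $s' := s[a_1 : e_k] = wwq$ with $q := q_k$ a non-empty prefix of $w$ and $|q| \leqslant m$. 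A direct index check (using that $q$ is a prefix of $w$ and $|q| \leqslant m$) gives $s'_t = w_{t \bmod m}$ throughout, so $s'$ is a prefix of $w^\infty$ and has period $m$; moreover $e_k = a_1 + 2m + |q|$ forces $a_k = a_1 + |q|$, so every minimal square starts in the integer interval $[a_1, a_1 + |q|]$.

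Next I would fill that interval. By periodicity each length-$2m$ window $s'[i : i + 2m]$ with $0 \leqslant i \leqslant |q|$ is a square of half-length $m$, and it must in fact be \emph{minimal}: otherwise it would contain a strictly smaller square, which in turn contains a minimal square of half-length $< m$, contradicting the previous paragraph. Thus all $|q| + 1$ windows are minimal squares, giving $|q| + 1$ distinct starts that fill $[a_1, a_1 + |q|]$; conversely every minimal square has half-length $m$ and starts in this interval, hence coincides with one of the windows. So the minimal square substrings of $s$ are exactly these $|q| + 1$ windows. Setting $p := q$, they are precisely the length-$2|w|$ substrings of $wwp = s'$, which proves the ``exactly $|p|+1$'' count.

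The remaining point is maximality and uniqueness of $wwp$. Let $T$ be any substring of $s$ of the form (minimal square)$\cdot$(prefix of its root). Its square part is one of the windows, so $T$ starts at some $a_j \in [a_1, a_1+|q|]$, has half-length $m$, and its trailing prefix extends the period-$m$ relation one block further right. I would check that $T$ cannot reach past the right end $e_k$ of $s'$: if it did, merging its period-$m$ interval with that of $s'$ (they overlap in a block of length $\geqslant m$) would make $s[a_k + 1 : a_k + 1 + 2m]$ a square of half-length $m$, hence a minimal square starting beyond $a_k$ — impossible. Therefore $T \subseteq s'$; since $s'$ is itself of the required form, it is the unique inclusion-maximal such substring. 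The main obstacle is really the second paragraph: arranging the proper-prefix/proper-suffix hypotheses together with the length inequality so that Lemma~\ref{square-overlap} is pushed into its degenerate branch — once the common half-length and the period-$m$ structure of $s'$ are in hand, everything else is periodicity bookkeeping.
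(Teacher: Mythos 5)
Your proof is correct, and its skeleton is the same as the paper's: apply Lemma~\ref{square-overlap} to pairs of minimal-square occurrences, note that the hypothesis (every square substring of $s$ has half-length at least $|s|/3$) rules out the first alternative of that lemma, deduce that all minimal square substrings share one half-length $m$ and that the span from the leftmost start to the rightmost end has the form $wwq$ with $q$ a prefix of $w$, and then identify the minimal square substrings with the $|q|+1$ windows of length $2m$ inside that span. There are two genuine (and welcome) local differences. First, to show that each window is \emph{minimal}, the paper observes that the windows are cyclic shifts of $ww$ and invokes the fact that a cyclic shift of a minimal square is again minimal (by case analysis or Shur's circular-word criterion); you instead argue that a non-minimal window would contain a minimal square of half-length strictly below $m$, contradicting the uniformity of half-lengths you had already established --- this is more elementary and self-contained, requiring no cyclic-shift lemma and no external citation. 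Second, you make the inclusion-maximality and uniqueness of $wwp$ fully explicit, by merging the period-$m$ interval of an arbitrary candidate substring of that form with the period-$m$ span and manufacturing a square starting past $a_k$; the paper leaves this point essentially implicit (``because $wwp$ was chosen to be the union of the leftmost one and the rightmost one''). One small gloss in that last step: the square you produce at position $a_k+1$ need not itself be minimal, but your own half-length-uniformity argument disposes of that case as well, so the gap is cosmetic rather than substantive.
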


\begin{proof} The case when $s$ has exactly one minimal square substring corresponds to $p = \varepsilon$. Now, suppose that $s$ has at least two distinct minimal square substrings. Consider any two of them. Because they are minimal squares, neither of them is a substring of another. So,
one of them starts and ends earlier than another and we can apply Lemma~\ref{square-overlap}.

Hence, these squares have the same length (otherwise $|s| > 3(|s|/3) + 1 = |s| + 1$). Because the above statement is true for \emph{any} two minimal square substrings of $s$, \emph{all} minimal 
square substrings of $s$ have the same length. Consider the leftmost and the rightmost of them.
They intersect because $s$ is short enough, and their union has the form $wwp$ with $p$ being a prefix of $w$ by Lemma~\ref{square-overlap}. Then, any substring of their union with length $2|w|$
is a minimal square. Indeed,
all substrings of $wwp$ with length $2|w|$ are cyclic shifts of $ww$. A cyclic shift
of a minimal square is also a minimal square; one can prove this either by case analysis or
by using Shur's result that a square is minimal if and only if its half is square-free as a 
\emph{cyclic} string~\cite[Proposition 1]{circular}. 

Summarising, all substrings of $wwp$ with length $2|w|$ are minimal squares, and $s$
has no other minimal square substrings, because $wwp$ was chosen to be the union of 
the leftmost one and the rightmost one
\end{proof}

\begin{remark} In Lemmas~\ref{square-overlap} and~\ref{key-lemma} slightly better bounds are
actually true, but even the best possible bounds lead only to constant factor improvements in the final algorithm.
\end{remark}

Let $n$ be the length of square-strings we need to count.
Moreover, let $A$ be the automaton
from Theorem~\ref{crucian} for the set $M_{\prt{n/3}}$ and $f(\ell, q)$ be the number of 
strings with length $\ell$ that are rejected by $A$, when the computation starts in the
state $q$. In other words, $f(\ell, q)$ is the number of strings $s$, such that $|s| = \ell$ and
$\delta(q, s) \notin F$.

\begin{definition} A square is said to be \emph{short}, if its half-length is at most $\prt{n/3}$.
\end{definition}

\begin{definition} A string is \emph{promising} if it has no short square substrings.
\end{definition}

\begin{remark}\label{reverse-sqfree} A string $s$ is promising if and only if $s^R$ is promising: if $s$ contains square
$ww$, then $s^R$ contains $w^R w^R$ and vice versa.
\end{remark}

\begin{definition} For a promising string $t$ with length at most $n$ and an integer number $\ell$ with
$0 \leqslant \ell \leqslant n - |t|$, denote by $g(\ell, t)$ the number of promising strings of length
$n$ that have $t$ as a substring, starting with position $\ell$. 
\end{definition}

\begin{lemma}\label{decent-with-substring} For any promising string $t$ with $2\prt{n/3} + 1 \leqslant |t| \leqslant n$ and any integer $0 \leqslant \ell \leqslant n - |t|$, $g(\ell, t) = f(\ell, \delta(q_0, t^R)) \cdot f(n - |t| - \ell, \delta(q_0, t))$.
\end{lemma}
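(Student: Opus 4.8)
The plan is to count $g(\ell,t)$ by writing every candidate string as $s = LtR$ with $|L| = \ell$ and $|R| = n - |t| - \ell$, and to show that the single global constraint ``$s$ is promising'' splits into two \emph{independent} local constraints, one on $L$ and one on $R$. The pivotal observation is that a short square has length at most $2\prt{n/3}$, which is strictly smaller than $|t|$ by the hypothesis $|t| \geqslant 2\prt{n/3} + 1$. Hence a short square substring of $s$ cannot touch both a position of $L$ and a position of $R$: spanning from inside $L$ to inside $R$ forces length at least $|t| + 2 > 2\prt{n/3}$. So every short square of $s$ lies inside the prefix $Lt$ or inside the suffix $tR$, giving the equivalence ``$s$ is promising iff both $Lt$ and $tR$ are promising''. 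The forward direction is immediate, since substrings of a promising string are promising; the backward direction is precisely the ``cannot span $t$'' argument. I expect this decomposition to be the main obstacle, as it is where the length bound on $t$ is genuinely used and where the boundary positions must be handled with care.

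Granting the decomposition, the indicator of ``$s$ promising'' factors as the product of the indicators of ``$Lt$ promising'' and ``$tR$ promising'', so summing over $L \in \Sigma^\ell$ and $R \in \Sigma^{n-|t|-\ell}$ turns a sum of products into a product of sums:
\[
g(\ell, t) = \Bigl(\sum_{L \in \Sigma^\ell} [\,Lt \text{ promising}\,]\Bigr)\Bigl(\sum_{R \in \Sigma^{n-|t|-\ell}} [\,tR \text{ promising}\,]\Bigr).
\]
For the second factor I would first note that a string is promising exactly when it contains no element of $M_{\prt{n/3}}$: the shortest square substring of any string is automatically minimal and has half-length at most $\prt{n/3}$ whenever a short square is present. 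By Theorem~\ref{crucian} this is the same as being rejected by $A$ from $q_0$. Since $\delta(q_0, tR) = \delta(\delta(q_0,t), R)$, the language property gives that $tR$ is promising iff $\delta(\delta(q_0,t), R) \notin F$, so the second sum is exactly $f(n - |t| - \ell, \delta(q_0, t))$ by the definition of $f$.

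For the first factor the suffix is fixed while the prefix varies, which does not fit the shape of $f$, so I would pass to reversals. By Remark~\ref{reverse-sqfree}, $Lt$ is promising iff $(Lt)^R = t^R L^R$ is promising, and as $L$ ranges over $\Sigma^\ell$ so does $L^R$. Renaming $L^R$ to a fresh variable, the first sum becomes the number of $L' \in \Sigma^\ell$ with $t^R L'$ promising, which by the same automaton argument equals $f(\ell, \delta(q_0, t^R))$. Multiplying the two factors gives $g(\ell,t) = f(\ell, \delta(q_0, t^R)) \cdot f(n - |t| - \ell, \delta(q_0, t))$, as claimed. The only bookkeeping left is to observe that $t$ and $t^R$ are themselves promising---given by hypothesis together with Remark~\ref{reverse-sqfree}---so that the reversal step is valid and the starting states $\delta(q_0,t)$ and $\delta(q_0,t^R)$ are non-accepting.
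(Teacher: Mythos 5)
Your proof is correct and takes essentially the same route as the paper's: decompose $s = LtR$, observe that since $|t| \geqslant 2\prt{n/3}+1$ no short square can span from $L$ to $R$ (so ``promising'' splits into independent conditions on $Lt$ and $tR$), count the suffixes directly via $f(n-|t|-\ell, \delta(q_0,t))$, and count the prefixes via reversal and Remark~\ref{reverse-sqfree} as $f(\ell, \delta(q_0,t^R))$. Your added justification that ``promising'' coincides with rejection by $A$ (the shortest square substring is automatically minimal) is a detail the paper leaves implicit, but it does not change the argument.
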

\begin{proof} Suppose that $xty$ is a string satisfying the condtions with lemma, with $|x| = \ell$.
Then, $xty$ is promising if and only if $xt$ and $ty$ are both promising. Indeed, because $|t| \geqslant 2\prt{n/3}+1$, any
possible short square in $xty$ fully fits in either $xt$ or $ty$. By definition of function $f$, $ty$ is promising if and only if $A$ rejects $y$, starting from $\delta(q_0, t)$. Hence, there is 
$f(|y|, \delta(q_0, t)) = f(n - |t| - \ell, \delta(q_0, t))$ ways to choose $y$. 

Similarly, by Remark~\ref{reverse-sqfree}, $xt$ is promising if and only if $t^R x^R$ is promising.
Hence, there are $f(|x^R|, \delta(q_0, t^R)) = f(\ell, \delta(q_0, t^R))$ ways to choose $x$.
All in all, there are $f(\ell, \delta(q_0, t)) \cdot f(n - |t| - \ell, \delta(q_0, t))$ ways to choose
$x$ and $y$.
\end{proof}

Now, everything is in line for the improved algorithm.

\begin{theorem} One can compute $|L_n|$ in $O^*(|L_{\prt{n/2}}|)$ time and $O^*(|M_{\prt{n/3}}|)$
memory. 
\end{theorem}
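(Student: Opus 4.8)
The plan is to split the count as $|L_n| = P_n - N_n$, where $P_n$ is the number of promising strings of length $n$ and $N_n$ is the number of promising strings of length $n$ that are \emph{not} square-free. The first term is immediate: a string of length $n$ is promising exactly when $A$ rejects it from $q_0$, so $P_n = f(n, q_0)$, a single entry of the table the row-by-row dynamic program already produces. All the work is in $N_n$. First I would check that every non-square-free promising string $s$ of length $n$ meets the hypothesis of Lemma~\ref{key-lemma}: such an $s$ has no square of half-length at most $\prt{n/3}$, and since half-lengths are integers this is the same as having no square of half-length strictly below $n/3$, for every residue of $n$ modulo $3$. Hence Lemma~\ref{key-lemma} applies, and $s$ carries a unique inclusion-maximal block $w_s w_s p_s$ with exactly $|p_s| + 1$ minimal square substrings, all of the common half-length $|w_s|$ satisfying $\prt{n/3} < |w_s| \leqslant \prt{n/2}$.

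The heart of the argument is an ``incomplete inclusion--exclusion'' that turns the variable count $|p_s| + 1$ into the constant $1$. Let $D_n$ count pairs (promising string $s$ of length $n$, occurrence of a minimal square in $s$), and let $E_n$ count pairs (promising string $s$ of length $n$, occurrence of two minimal squares at adjacent starting positions). By Lemma~\ref{key-lemma} the minimal squares of such an $s$ occupy $|p_s|+1$ consecutive starting positions, so $s$ contributes $|p_s|+1$ to $D_n$ and $|p_s|$ to $E_n$; summing over all $s$ gives $N_n = D_n - E_n$. This is precisely the Euler characteristic of a path: the occurrences are $|p_s|+1$ vertices, the adjacencies are $|p_s|$ edges, and $(|p_s|+1) - |p_s| = 1$ per string.

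It remains to evaluate $D_n$ and $E_n$, and here Lemma~\ref{decent-with-substring} does the counting. Reordering each sum by the marked object, $D_n = \sum_{ww}\sum_{\ell} g(\ell, ww)$, where $ww$ ranges over minimal squares with $\prt{n/3} < |w| \leqslant \prt{n/2}$ and $\ell$ over the admissible offsets, and likewise $E_n = \sum_{t}\sum_{\ell} g(\ell, t)$, where $t$ ranges over length-$(2|w|+1)$ strings both of whose length-$2|w|$ windows are minimal squares. Both $ww$ (length $2|w|$) and $t$ (length $2|w|+1$) are promising and have length at least $2\prt{n/3}+1$, since any short square inside either would fit in a single window that is itself a minimal square. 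Thus Lemma~\ref{decent-with-substring} applies, and for $z \in \{ww, t\}$ the value $g(\ell, z) = f(\ell, \delta(q_0, z^R)) \cdot f(n - |z| - \ell, \delta(q_0, z))$ is read directly off the $f$-table. Therefore $|L_n| = f(n, q_0) - D_n + E_n$.

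For the resource bounds I would stream the enumeration rather than store it. The automaton $A$ for $M_{\prt{n/3}}$ and the full table $f(\ell, q)$ cost $O^*(|M_{\prt{n/3}}|)$ time and memory. The minimal squares $ww$ with $|w| \leqslant \prt{n/2}$ are produced one at a time by backtracking over square-free words of length at most $\prt{n/2}$ in $O^*(|L_{\prt{n/2}}|)$ time and $O^*(1)$ extra memory; for each I feed $ww$ and $(ww)^R$ through $A$ to obtain the two states, run the $O(n)$ loop over $\ell$, and in passing test the at most three one-character extensions that yield an admissible $t$ for $E_n$. Nothing about these long squares is retained, so the memory stays at $O^*(|M_{\prt{n/3}}|)$ while the time is dominated by the enumeration, $O^*(|L_{\prt{n/2}}|)$, as claimed. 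I expect the main obstacle to be bookkeeping rather than ideas: one must confirm the two rounding-sensitive length hypotheses (that promising length-$n$ strings satisfy Lemma~\ref{key-lemma}, and that $ww$ and $t$ are simultaneously promising and long enough for Lemma~\ref{decent-with-substring}), and must verify that distinct minimal-square occurrences within a block are genuinely distinct strings at distinct positions (using that a square-free word is primitive), so that $D_n$ and $E_n$ count exactly what the identity $N_n = D_n - E_n$ requires.
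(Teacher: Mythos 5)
Your proposal is correct and matches the paper's own argument essentially step for step: the same decomposition $|L_n| = f(n,q_0) - N_n$, the same $(|p|+1)-|p|=1$ correction (your adjacent-occurrence pairs $E_n$ are exactly the paper's $www_0$ substrings, since cyclic shifts of minimal squares are minimal), the same evaluation of both sums via Lemma~\ref{decent-with-substring}, and the same streaming enumeration giving $O^*(|L_{\prt{n/2}}|)$ time and $O^*(|M_{\prt{n/3}}|)$ memory.
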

\begin{proof} By Lemma~\ref{key-lemma}, there are three types of strings with length $n$:
\begin{enumerate}
\item[1)] not promising
\item[2)] promising, but not square-free, they have $wwp$ substring as per Lemma~\ref{key-lemma}.
\item[3)] square-free
\end{enumerate}

We want to know the number of strings of type 3.
By definition of being promising, the total number of strings of types 2 and 3 is $f(n, q_0)$.

Consider any string of type 2. We can try to count them using Lemma~\ref{decent-with-substring}, by iterating over a minimal square substring and its position.
Of course, there is massive overcounting happening here: if $wwp$ is the
substring of $s$ that is given by Lemma~\ref{key-lemma}, then we count $s$ exactly
$|p| + 1$ times. To deal with this, notice that, for such a string, there are exactly $|p|$
substrings of type $xxx_0$, where $xx$ is a minimal square: exactly the substrings of  $wwp$
with length $2|w|+1$. Hence, counting them with minus sign fixes the overcounting problem perfectly,
because $(|p|+1) - |p| = 1$.

In the end, there are 
\begin{equation}\label{main_expression}
f(n, q_0) - \left( \sum\limits_{ww} \sum\limits_{i=0}^{n-|ww|} g(i, ww)
- \sum\limits_{ww} \sum\limits_{i=0}^{n-|ww|-1} g(i, www_0) \right)
\end{equation}
strings of type 3,
where both summations are over minimal squares with half-length at least $\prt{n/3}+1$.

Let's trace the steps necessary to complete the algorithm:
\begin{enumerate}
\item[1)] Find the set $M_{\prt{n/3}}$ and build the automaton $A$. Takes
$O^*(L_{\prt{n/3}})$ time and $O^*(1)$ memory.
\item[2)] Compute the values $f(\ell, q)$ for $0 \leqslant \ell \leqslant \prt{n/3}$ and $q \in Q$. 
Takes $O^*(M_{\prt{n/3}})$ time and memory.
\item[3)] Iterate over all minimal squares with half-length at most $\prt{n/2}$, in order to 
compute the sum~\eqref{main_expression}. This is the slowest
part. Iterating over all minimal squares with half-length at most $\prt{n/2}$ takes $O^*(L_{\prt{n/2}})$ time and $O^*(1)$
memory. Notice, that there is no need to actually store them all in memory, knowing
only the current one and the values of $f$ is enough. Lemma~\ref{decent-with-substring}
comes in play here, allowing to express $g$'s through $f$'s.
\end{enumerate}

In the end, total time complexity is still $O^*(L_{\prt{n/2}})$, but the memory complexity
is $O^*(M_{\prt{n/3}})$, as promised.
\end{proof}

\section{Possible time-memory tradeoffs?}\label{improvement}

I like to think about the algorithm from Section~\ref{main_algo} as an incomplete application of the inclusion-exclusion principle.
Indeed, we take all promising strings of length $n$ and subtract promising strings with at least one minimal square substring
(well, up to technical details in form of the $wwp$ substrings). 
In a normal inclusion-exclsuion algorithm, we would need to 
add back promising strings with at least two different minimal squares, then subtract promising strings with at least three different
minimal squares again, et cetera. However, it turns out that, up to some simple counterexamples, 
there are \emph{no}
promising strings with two different minimal squares! 

But what will happen if we replace $\prt{n/3}$ with a smaller number, say, $\prt{n/10}$, and do several steps
of inclusion-exclusion instead of just one? 

As it turns out, this leads to smaller memory consumption at the cost of higher running time.
Indeed, let's fix some $k \geqslant 4$.
\begin{definition} A square is said to be \emph{$k$-short}, if its half-length is at most $\prt{n/k}$.
\end{definition}
\begin{definition} A string is \emph{$k$-promising} if it has no $k$-short square substrings.
\end{definition}
Consider any $k$-promising string $s$. Intuitively, Lemma~\ref{key-lemma} implies that all pairs of minimal square substrings of $s$ either have small intersection or are both a part of a large $wwp$ block.
Hence, there ought to be only $O(1)$ such blocks~--- otherwise some would have large intersection
by Dirichlet's principle.

Let us explain the intuition from the previous paragraph formally. 
Consider \emph{all} minimal
square substrings of some $k$-promising $s$, sorted by the coordinate of their left end: $s[\ell_1, r_1), s[\ell_2, r_2), \ldots, s[\ell_d, r_d)$, with $\ell_1 < \ell_2 < \ldots < \ell_d$ (all inequalities are strict; otherwise
some minimal square would be a prefix of another). Then, $r_1 < r_2 < \ldots < r_d$ (otherwise some minimal square is a substring of another). For each $i$ from $1$ to $d$ inclusive, 
denote the middle position of the $i$-th minimal square by $m_i$. In other words, $m_i = (\ell_i + r_i)/2$. It is easy to see that middles are also increasing: if $\ell_i < \ell_{i+1}$, but $m_i 
\geqslant m_{i+1}$, then $r_i = 2 \cdot m_i - \ell_i > 2 \cdot m_{i+1} - \ell_{i+1} = r_{i+1}$. Finally, denote the square itself by $u_i u_i$. That is, $s[\ell_i, m_i) = s[m_i, r_i) = u_i$.

Indeed, consider some index $1 \leqslant i \leqslant d - 1$.
Then, by Lemma~\ref{key-lemma}, there are the following possibilities:
\begin{enumerate}
\item[1.] Substrings $s[\ell_i, r_i)$ and $s[\ell_{i+1}, r_{i + 1})$ do not intersect at all. In other words,
$r_i \leqslant \ell_{i + 1}$. Then, $m_{i + 1} = \ell_{i + 1} + |u_{i+1}| > \ell_{i+1} + \prt{n/k} \geqslant r_i + \prt{n/k} = m_i + |u_i| + \prt{n/k} > m_i + 2\prt{n/k}$ (the first and the last inequalities corresponds to the fact that all square substrings of $s$ are long enough).
\item[2.] Substrings $s[\ell_i, r_i)$ and $s[\ell_{i+1}, r_{i+1})$ intersect, but the length of their union, the substring $s[\ell_i, r_{i+1})$, is at least $3 \min(|u_i|, |u_{i+1}|) + 1$. That is,
\begin{equation}\label{from-lemma}
r_{i + 1} - \ell_i \geqslant 3\min(|u_i|, |u_{i+1}|) + 1
\end{equation}
Because $s[\ell_i, r_{i+1})$ contains both $u_i u_i$ and $u_{i+1} u_{i+1}$ as proper substrings, $(r_{i+1} - \ell_i) \geqslant 2\max(|u_i|, |u_{i+1}|)$. By taking the average with inequality~\eqref{from-lemma}, $r_{i+1} - \ell_i \geqslant
(3\min(|u_i|, |u_{i+1}|) + 2\max(|u_i|, |u_{i+1}|)/2 + 1 = \min(|u_i|, |u_{i+1}|)/2 + ( \min(|u_i|, |u_{i+1}|) +  \max(|u_i|, |u_{i+1}|)) + 1 =  \min(|u_i|, |u_{i+1}|)/2 + (|u_i| + |u_{i+1}|) + 1 \geqslant \prt{n/k}/2 + (|u_i| + |u_{i+1}| + 1)$. Hence, $m_{i + 1} - m_i = (r_{i+1} - \ell_i) - (|u_i| + |u_{i+1}|) \geqslant \prt{n/k}/2 + 1$.
\item[3.] The string $s[\ell_i, r_{i+1})$ has small length ($r_{i+1} - \ell_i \leqslant 3\min(|u_i|, |u_{i+1}|)$), but $|u_i| = |u_{i+1}|$. Then, by the conclusion of Lemma~\ref{key-lemma}, 
$s[\ell_i + 1, r_i + 1)$ is a minimal square. Therefore, $\ell_{i+1} = \ell_i + 1$ and $r_{i+1} = r_i + 1$. In this case, the difference between $m_{i+1}$ and $m_i$ is not large, but,
similarly to the Section~\ref{main_algo}, we can consider such minimal squares in batches.
\end{enumerate}
Hence, all minimal square substrings of $s$ split into $b$ inclusion-maximal batches for some $b \geqslant 0$, with $i$-th ($1 \leqslant i \leqslant b$) of them defined by three parameters $L_i$, $R_i$ and $T_i \geqslant 1$:
the first minimal square in the batch and the size of the batch. Formally speaking, a batch $(L_i, R_i, T_i)$ corresponds to the fact that substrings $s[L_i + j, R_i + j)$ are minimal squares for each $0 \leqslant j < T_i$, but $R_i + T_i > |s| = n$ or $s[L_i + T_i, R_i + T_i)$ is not a minimal square and, similarly, $L_i - 1 < 0$ or $s[L_i - 1, R_i - 1)$ is not a minimal square.

Let $M_i = (L_i + R_i)/2$ be the middle of the first square in each batch. From the above, it follows that $M_i$'s are increasing rather quickly. More specifically, $M_{i+1} - M_i > \prt{n/k}/2$ for each $1 \leqslant i \leqslant b - 1$. Hence, $b \leqslant 2k + 1$~--- otherwise $M_b > (2k+1-1)\cdot\prt{n/k}/2 \geqslant n$.

Hence, for any $k$-promising string, there are $O(k)$ batches in total. Each batch is uniquely defined by its integer parameters $(L_i, R_i, T_i)$ and a square-free string $s[L_i, M_i)$ of length $M_i - L_i = (R_i - L_i)/2$. Of course, some square-free strings do not correspond to a valid batch, but this it not important right now. From now on, by \emph{batch}, I mean the tuple 
$(L_i, R_i, T_i, U_i)$, with $U_i$ being a square-free string of length $(R_i - L_i)/2$. 
A string $s$ \emph{contains} a batch $(L_i, R_i, T_i, U_i)$ if $s[L_i, L_i + |U_i|) = U_i$, substrings $s[L_i + j, R_i + j)$ are minimal squares for each $0 \leqslant j < T_i$ and the batch itself is maximal possible by inclusion (in other words, $L_i - 1 < 0$ or $s[L_i - 1, R_i - 1)$ is not a minimal 
square and $R_i + T_i > n$ or $s[L_i + T_i, R_i + T_i)$ is not a minimal square).

\begin{example} For $k = 4$, a string $abcabcabc$ is $k$-promising and contains
exactly one batch: $(0, 6, 4, abc)$. It does not contain batches $(1, 7, 3, bca)$, $(0, 6, 3, abc)$ and $(1, 7, 2, bca)$, because they are not inclusion-maximal.
\end{example}

We want to compute the number of square-free strings, or, in other words, $k$-promising strings that contain no batches. For a set $S$ of batches let $h(S)$ be the number of $k$-promising
strings of length $n$ that contain all batches from the set $S$, but \emph{may also contain some other batches}. Then, by inclusion-exclusion, the answer for length $n$ is
$\sum\limits_{|S| \leqslant 2k + 1} (-1)^{|S|} h(S)$, where the summation is over all possible sets of batches (as we know already, there are no strings that contain $2k + 2$ or more batches). Hence, we are left with the two following subproblems:
\begin{itemize}
\item[1.] For a given set $S$ of batches, compute $h(S)$ quickly enough.
\item[2.] Iterate over all possible sets of batches efficiently. In particular, prove that there are not 
too many possible sets.
\end{itemize}

Let us solve the first subproblem first.
\begin{lemma}\label{single} $h(S)$ can be computed in $O^*(2^{O(k)})$ after precomputation
that uses $O^*(|M_{\prt{n/k}}|^3)$ time and $O^*(|M_{\prt{n/k}}|^2)$ memory. 
Moreover, for $k = 4$,  
$h(S)$ can be computed in $O^*(2^{O(k)}) = O^*(1)$ time after precomputation that uses $O^*(|M_{\prt{n/k}}|)$ time and memory. 
\end{lemma}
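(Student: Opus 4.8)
The plan is to exploit that every batch in $S$ pins down a long, completely determined block of $s$, and that these blocks are too long for any $k$-short square to straddle, so the $k$-promising condition decouples into essentially independent conditions on the gaps between blocks. First I would discard inconsistent sets $S$ (overlapping blocks forcing contradictory symbols, clashing maximality flags, or a half-length $|U_i| \leqslant \prt{n/k}$, all giving $h(S)=0$). A string counted by $h(S)$ then decomposes as $x_0 B_1 x_1 \cdots B_b x_b$, where each $B_i$ is the fixed block $U_i U_i p_i$ (content forced by $U_i$, with $p_i$ the length-$(T_i-1)$ prefix of $U_i$, so $|B_i| = 2|U_i| + T_i - 1 > 2\prt{n/k}$) and the $x_i$ are the free gaps. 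Since a $k$-short square has length at most $2\prt{n/k} < |B_i|$, no such square can contain a whole block; hence $s$ is $k$-promising iff, for every gap, no $k$-short square lies inside that gap or straddles one of its two junctions with the neighbouring blocks. As $s$ is $k$-promising exactly when the automaton $A$ of Theorem~\ref{crucian} for $M_{\prt{n/k}}$ never accepts while reading $s$, each gap condition becomes a condition on $A$, which is what makes it countable.

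Next I would count the gap fillings. The two outer gaps $x_0$ and $x_b$ are exactly the situation of Lemma~\ref{decent-with-substring}: the adjacent block supplies a pivot state, so the number of fillings is $f(|x_0|, \delta(q_0, B_1^R))$ on the left (reverse automaton) and $f(|x_b|, \delta(q_0, B_b))$ on the right (forward automaton). An internal gap $x_i$ is pinned on \emph{both} sides: its left junction is governed by the forward state $\delta(q_0, B_i)$ and its right junction by the reverse state $\delta(q_0, B_{i+1}^R)$, and the number of clean fillings depends only on those two states and on $|x_i|$. The plan is to precompute this two-sided quantity $F^{\leftrightarrow}(\ell, p, \bar q)$ for all lengths $\ell$ and all pairs of states $(p,\bar q)$, via a transfer-matrix computation over the non-accepting states of $A$ whose $O(n)$ powers I would accumulate; since $|Q| = O^*(|M_{\prt{n/k}}|)$, storing the table costs $O^*(|M_{\prt{n/k}}|^2)$ memory and building it costs $O^*(|M_{\prt{n/k}}|^3)$ time. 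Given the table, each gap count is a single look-up after tracing the $O(k)$ relevant blocks (and their reverses) through $A$, which is $O^*(1)$.

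The maximality of each batch -- that the would-be minimal square immediately outside each end of a block is \emph{not} minimal -- is a local condition on one boundary symbol, and I would enforce all $O(k)$ such conditions by inclusion-exclusion, summing with signs over the subsets of boundaries at which maximality is relaxed, exactly mirroring the $g(i,ww) - g(i,www_0)$ correction of Section~\ref{main_algo}. Each of the $2^{O(k)}$ resulting terms factorises over the gaps into already-tabulated $f$- and $F^{\leftrightarrow}$-values, so $h(S)$ is assembled in $O^*(2^{O(k)})$. For $k=4$ the whole picture collapses: a minimal square of a $4$-promising string has half-length $> \prt{n/4}$, so two \emph{disjoint} minimal squares would have total length at least $4(\prt{n/4}+1)$, which exceeds $n$ -- impossible. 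Hence consecutive batches always overlap (case~2 or 3 in the analysis following Lemma~\ref{key-lemma}), every block merges into one contiguous fixed region, only the two outer gaps survive, and $h(S)$ reduces to the product of two $f$-values of Lemma~\ref{decent-with-substring}; this needs only the table $f(\ell,q)$, computable in $O^*(|M_{\prt{n/4}}|)$ time and memory, and answers each query in $O^*(1)$.

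The main obstacle is the internal gaps. Proving the decoupling rigorously -- that forbidding $k$-short squares gap-by-gap, with every square straddling at most one junction, is equivalent to $k$-promisingness -- rests on the block-length bound but must also dispose of short or empty gaps, where the two junctions (and a square straddling the middle of the gap) can interact; these degenerate gaps I expect to treat as finitely many explicit cases. The more delicate point is computing $F^{\leftrightarrow}$ within budget: the left junction is naturally read left-to-right while the right junction is naturally read right-to-left, so the two-sided count is not a plain power of a single transfer matrix, and organising the computation so that it fits into $O^*(|M_{\prt{n/k}}|^3)$ time and $O^*(|M_{\prt{n/k}}|^2)$ memory is exactly where the effort concentrates.
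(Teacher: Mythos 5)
Your proposal takes essentially the same route as the paper's own proof: fixed known blocks separated by free gaps, decoupling because a $k$-short square cannot straddle an entire block, one-sided $f$-counting for the outer gaps, a precomputed two-sided table indexed by $(\ell, \delta(q_0, w_p), \delta(q_0, w_q^R))$ costing $O^*(|M_{\prt{n/k}}|^2)$ memory and $O^*(|M_{\prt{n/k}}|^3)$ time, $2^{O(k)}$ terms to enforce inclusion-maximality, and the $k=4$ collapse to a single known block handled as in Lemma~\ref{decent-with-substring}. The only differences are cosmetic: the paper enumerates the two allowed values of each constrained boundary symbol rather than your signed inclusion-exclusion over relaxed maximality conditions, and it fills in the step you flagged as delicate exactly as you anticipated---computing $f_{\mathrm{both}}(\ell,p,q)$ (your transfer-matrix powers) and, for a representative string of each reverse state, summing over the $q$ with $\delta(q,w_q)\notin F$, the reduction to two states being justified by the reversal symmetry you assert.
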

\begin{proof} Suppose that some string $s$ contains every batch from $S$. 
Then, we already know what some symbols in $s$ are equal to. 
Moreover, because each batch from $S$ is inclusion-maximal, we know for some symbols what 
they are \emph{not} equal to. Specifically, for a batch $(L, R, T, U)$ and $M \coloneqq (L+R)/2$ we know that $s_{L - 1} \neq s_{M-1}$ and $s_{R+T-1} \neq s_{M+T-1}$ (if $L - 1 \geqslant 0$ and $R + T - 1 < n$ respectively, of course). For each such symbol (at most $2 \cdot (2k + 1) = O(k)$ of them), iterate over two possibillities. For each of those $2^{O(k)}$ cases, 
check two things (both can be done in $O^*(1)$ time by simply iterating over all fully-known substrings of $s$):
\begin{itemize}
\item that $s$ does not contain a $k$-short square consisting only of known symbols,
\item that each batch from $S$ indeed is a valid batch contained in $s$.
\end{itemize}

Now, we are left with a simpler problem: how many $k$-promising strings are there, assuming
that symbols on some positions are already known? Moreover, positions with known symbols appear
in blocks of length at least $2(\prt{n/k}+1)$ each. Hence, any $k$-short square intersects exactly one 
block of \emph{unknown} symbols (otherwise it fully contains a block of known symbols and, therefore, cannot be $k$-short).

Firstly, let us deal with the simpler case of $k = 4$. In this case, there is at most one block of 
known symbols. Indeed, each such block has length at least $2(\prt{n/4} + 1)$ and there is just not enough space for two of them. Hence, there is an unknown prefix, a fully-known middle and an unknown suffix (each of those three parts may be empty). What we need to know is the number
of $k$-promising strings that conform to this pattern. This situation already appeared before: specifically, see Lemma~\ref{decent-with-substring}. We can define and compute the functions $f(\cdot, \cdot)$ and $g(\cdot, \cdot)$ in the same way, with only difference being that the automaton we build will corespond to $k$-short squares and will therefore have size $O^*(|M_{\prt{n/k}}|)$.

In the general case, there \emph{may} be blocks of unknown symbols that are surrounded by known symbols from both left and right. However, all blocks of unknown symbols can still be filled independently. Consider a block of unknown symbols of length $\ell$ that is surrounded by (possibly, empty) blocks $w_p$ and $w_s$ of known symbols. Let $A_k = (Q, q_0, \delta, F)$ be the automaton from theorem~\ref{crucian} for $k$-short squares. Then, we can fill-in unknown symbols with a string $s \in \Sigma^{\ell}$ if and only if $\delta(q_0, w_p s w_q) \notin F$. In other words,
$\delta(\delta(q_0, w_p), s), w_q) \notin F$. 

Hence, let us compute $f_{\mathrm{both}}(\ell, p, q)$: how many strings $s \in \Sigma^\ell$ are there, such that $\delta(s, p) = q$. We can do this in $O^*(|M_{\prt{n/k}}|^2)$ by dynamic programming over the states of $A_k$. To compute the number of ways to fill the block, 
substitute $p \coloneqq \delta(q_0, w_p)$ and iterate over all $q$, such that $\delta(q, w_q) \notin F$. 

Unfortunately, this approach takes $O^*(2^{O(k)} \cdot |M_{\prt{n/k}}|)$ time to compute $h(S)$ (the second factor comes from iterating over $q$). 
To get rid of the second factor, notice the following: for \emph{any} $s$, whether or not $w_p s w_q$ has
any $k$-short square substrings, depends only on $\delta(q_0, w_p)$ and $\delta(q_0, w_q^R)$, but not on their exact values (this immediately follows from the Theorem~\ref{crucian} and
the fact that $w_p s w_q$ does not have any $k$-short squares if and only if $(w_p s w_q)^R = w_q^R s^R w_p^R$ also does not. 

Hence, the numbers of ways to fill-in the block depends only on its length, $\delta(q_0, w_p)$ and $\delta(q_0, w_q^R)$. We can simply precompute all those $O^*(|M_{\prt{n/k}}|^2)$ numbers,
each in $O^*(|M_{\prt{n/k}}|)$ time.
\end{proof}

Finally, we need to iterate over all possible sets of batches somehow. Iterating over 
the numbers $L_i, R_i, T_i$ takes only $O(n^{O(k)})$ time. To iterate over possible strings
$U_i$, iterate over batches from left to right and fill them that in that order. 
Because each batch consists of consecutive minimal squares, $L_i + T_i \leqslant L_{i+1}$ and $R_i + T_i \leqslant R_{i+1}$
for consecutive batches. 
Hence, for each batch, some prefix of $U_i$ is already known, and some, possibly empty, suffix is not. 
The unknown part is a square-free string by itself.
Moreover, each symbol in the unknown part corresponds to at least two positions in the string (otherwise we would have figured out this symbol already). 
Hence,
we need to iterate over $O(k)$ strings of total length at most $\prt{n/2}$. It is known that $|L_\ell|$ grows exponentially.
In particular, $c_1 \gamma^\ell \leqslant |L_\ell| \leqslant c_2 \gamma^\ell$ for some $\gamma$ and $c_1, c_2 > 0$. 
Hence, there are at most $c_2^k \gamma^{\prt{n/2}}$ ways to choose these strings, which is at most $|L_{\prt{n/2}}| \cdot c_2^k / c_1
= O(|L_{\prt{n/2}}| \cdot 2^{O(k)})$. In total, iterating over all possible sets $S$ takes $O(|L_{\prt{n/2}}| \cdot n^{O(k)})$ time.

Hence, we need $O^*(|M_{\prt{n/k}}|^2)$ memory and $O^*(|L_{\prt{n/2}}| \cdot n^{O(k)})$ time (precomputation from Lemma~\ref{single} is irrelevant for large $k$).
Moreover, for $k = 4$ only $O^*(|M_{\prt{n/k}}|)$ memory is needed. Unfortunately, the practical value of this optimisation is questionable. The memory consumption of the algorithm from 
the Section~\ref{main_algo} is, indeed, quite a problem already for $n = 141$, but adding even \emph{one} extra $O(n)$ factor to the time complexity turns ``several hours'' into ``several \emph{weeks}''. Moreover, assuming that $|M_\ell|$ grows exponentially with $\ell$, we need to choose either
$k = 4$ or $k \geqslant 7$ to get any memory advantage. Because of the above, choosing $k \geqslant 7$ is completely hopeless. Choosing $k = 4$ is  an interesting idea that may lead to
a better results in the end, but I have not implemented it yet.

The main running time bottleneck of this approach is pretty apparent: even when $(L_i, R_i, T_i)$ are fixed, 
I do not know any way to avoid iterating over almost a half of the whole string in the worst case.
In fact, it seems difficult to compute the number of minimal squares of half-length $n$ in significantly less than $O(|M_n|)$ time. Intuitively,
counting only minimal squares corresponds to the first step of inclusion-exclusion and should therefore be easier somehow. However, even such, intuitively simpler, problem seems to be out of reach now.

\section{Final notes}

The algorithm from Section~\ref{main_algo} is implemented in the linked repository~\cite{github}, with some constant optimisations and other minor tweaks. There are still several optimisations
possible, both in terms of time and memory, but they are more annoying to implement. If you want to suggest some code improvements, contact me via e-mail.

As noted in the Section~\ref{improvement}, any substantial improvement to counting square-free words would likely require a faster way to count minimal squares. 
I believe that it also works in the opposite direction: any non-trivial algorithm for counting
minimal squares will lead to a better algorithm for counting square-free words.

\end{document}